\documentclass[letterpaper]{article}
\usepackage[preprint]{style}
\usepackage[hyphens]{url}
\usepackage{graphicx}
\usepackage{natbib}
\usepackage{caption}
\usepackage{algorithm}
\usepackage{algorithmic}

\usepackage{newfloat}
\usepackage{listings}
\DeclareCaptionStyle{ruled}{labelfont=normalfont,labelsep=colon,strut=off}
\floatstyle{ruled}
\newfloat{listing}{tb}{lst}{}
\floatname{listing}{Listing}

\usepackage{booktabs}
\usepackage{multirow}
\usepackage{amsmath}
\usepackage{amssymb}
\usepackage{amsthm}
\usepackage{comment}
\usepackage{xspace}
\usepackage{color}
\newtheorem{proposition}{Proposition}
\newtheorem{corollary}{Corollary}

\def\vs{\emph{vs.}\xspace}

\title{Caved or Convinced: Temporal Sampling Gates Claim Deference\\ in Video Large Language Models}

\author{
    Yuxin Cao\textsuperscript{\rm 1},
    Wei Song\textsuperscript{\rm 2},
    Jingling Xue\textsuperscript{\rm 2},
    Jin Song Dong\textsuperscript{\rm 1}\corresponding
}
\affiliations{
    \textsuperscript{\rm 1}National University of Singapore, Singapore\\
    \textsuperscript{\rm 2}University of New South Wales, Australia
}

\begin{document}
\maketitle

\begin{abstract}
When asked which of two events came first, video large language models can fail in two opposite ways: cave to a false claim, or reject a true one. Prior video sycophancy work measures only the first and mitigates it by teaching the model to trust the user less, a fix known in text and image models to worsen the second. In video, both failures come from two causes the literature treats as one: \emph{availability}, whether the sparse sampled frames contain the two events, and \emph{weighting}, whether that evidence is trusted over the user. We separate them with two interventions that keep the claim fixed: a frame-preserving reorder that flips the claim's truth, and a sampling-offset shift that captures or misses both events at a fixed frame budget. When the events are missed, the two twins present identical frames, so each of the nine models we evaluate accepts a true and a false claim at the same rate, making Youden's $J=0$ by construction. Availability is necessary but not sufficient. Five of the nine read the order, yet four of those five still cave to the false claim, so their deference hits a weighting ceiling. Since trust cannot be calibrated over evidence that was never sampled, we propose a reversal test that cancels the model's order prior by scoring the sampled frames forward and reversed, then answers, resamples, or abstains without reading the claim. The test raises the order accuracy to $0.92$--$1.00$ on the models that read the order and abstains rather than guesses on those that cannot. 
\end{abstract}

\section{Introduction}
Video large language models couple a visual encoder with a language model \citep{bai2025qwen25vl,li2024llavaonevision,chen2024internvl}, and they are now consulted about recorded events \citep{fu2025videomme,wu2024longvideobench} in settings such as incident review, surveillance, procedure checking, and content moderation. These uses are interactive. An operator rarely asks a neutral question. Instead the operator states a conclusion, for example that a car crossed before the light changed, and the model must decide whether to agree. This decision can fail in two opposite ways. If the claim is wrong and the model accepts it, the system records a confident error \citep{sharma2024understanding,zhou2025vise}. If the claim is right and the model refuses it, the system discards a correction it should have used \citep{beigi2025smart,li2025mmsy}. A reliable model must tell these two cases apart, and it can do so only from the frames it actually saw.

The first failure is known as sycophancy. Text models adopt a stated position against their own evidence \citep{perez2023discovering,sharma2024understanding}, image models lose visual grounding under leading questions \citep{li2025mmsy,rahman2025pendulum}, and a recent benchmark reports the same in video \citep{zhou2025vise}. These studies share one design: they keep the input fixed, vary the user's stated claim, measure only caving to a wrong claim, and reduce it by teaching the model to trust the user less. This one-sided fix has a known cost in text and image models, where it lowers the acceptance of valid corrections and makes the model more stubborn \citep{beigi2025smart,fanous2025syceval,li2025mmsy,rahman2025pendulum}. However, no prior work studies both directions in video, nor explains what makes a model defer there, which decides whether the two can be improved together rather than traded against each other.

In this paper we show that, for claims about the order of events, both directions are governed by a cause that exists only in video. A model does not see the whole video but a sparse set of frames, and whether it caves or stands firm depends on whether that set contains the two events named by the claim. We establish this with two controlled interventions that keep the user's claim verbatim, sketched in Figure~\ref{fig:method}. The first reorders the two events named by the claim, so that the same claim is true for one clip and false for its twin, while the set of event frames is unchanged and only their order differs. The second moves the sampling offset, at a fixed number of frames, so that the two events are either both captured or both missed. We summarize when the effect can occur through a coverage ratio $\rho$, the duration of each event divided by the interval between sampled frames, and the effect can appear only when $\rho<1$, that is when the evidence is brief enough to fall between two samples. Crossing the two interventions gives a clear prediction. When the events are missed, the two twins present the model with the same frames, so it accepts true and false claims at the same rate, the sign of blind agreement. When the events are captured, a model that can read the order can side with the video and accept a true claim more than a false one. How large this gap grows varies across models: some treat the captured evidence as decisive, while others read the order yet still cave to a heavily trusted user. Because the missed case gives identical inputs on the two twins, equal accept rates in that case follow by construction, so the argument rests on the captured case together with a frame-removal control. At a fixed budget, removing the frames that show the two events closes the gap, while removing the same number of other frames leaves it open. Neither plain conformity to the user nor plain uncertainty about the answer predicts any such dependence on the frames that are present. Across nine open video large language models from multiple families, five read the order when both events are sampled, and four of those still cave to the false claim.

This explanation separates two questions that the sycophancy literature has treated as one. The first is whether the evidence is available to the model at all. The second is whether the model trusts that evidence once it is available. We call this the availability and weighting separation: the first question asks whether the ordering evidence was sampled at all, the second asks how much of it survives the user's claim, and no amount of calibration on the second can recover a deficit on the first. A practical consequence follows. Because a model cannot ground an answer in evidence it never sampled, teaching it to trust the user less cannot help here, and instead it suppresses valid corrections. Because a strong prior answers confidently even from frames that never held the events, the raw answer cannot reveal grounding, but reversing the sampled frames flips a genuinely order-grounded answer while leaving a prior-driven one untouched, so their difference cancels the prior and reads off the order the frames actually support. This gives a safe response that never reads the user's claim: when the reversed and forward answers agree, the frames do not determine the order, so the model samples again or abstains. With a single fixed threshold, this lifts the order accuracy on the order-reading models from near chance to $0.92$--$1.00$.

This paper makes the following contributions.
\begin{itemize}
\item We separate two causes of deference that the current video sycophancy literature treats as one: \emph{availability}, whether the two events are among the sampled frames, and \emph{weighting}, whether that evidence is trusted over the user.
\item We build a causal testbed with two video-specific interventions, a truth-flipping reorder and a sampling-offset shift, that set truth and availability independently at a fixed budget, with a coverage ratio $\rho$ that bounds when the effect appears.
\item We evaluate nine advanced video large language models and find availability necessary but not sufficient: only InternVL3 turns the sampled order into large deference, while the others cannot read the order or hit a weighting ceiling, reading it yet still caving.
\item We report caving and corrective uptake together and give a claim-agnostic remedy: a reversal test that cancels the model's prior and then answers, resamples, or abstains without ever reading the user's claim.
\end{itemize}

\section{Related Work}

\paragraph{Sycophancy in Text Models.}
Language models often adjust their answers to match a user's stated view \citep{perez2023discovering}, a behavior traced to preference models that reward agreement \citep{sharma2024understanding} and made stronger when a claim is asserted with confidence or backed by authority \citep{dubois2026askdonttell,celebi2025parrot}. Yet agreement is not always wrong: it is regressive when the model moves to a wrong answer and progressive when it moves to a right one \citep{fanous2025syceval,atwell2025basil}. Fixes that teach the model to trust the user less, through synthetic data \citep{wei2023simple}, targeted fine-tuning \citep{chen2024yesmen}, or activation editing \citep{rimsky2024steering}, reduce caving but can lower the acceptance of valid corrections \citep{beigi2025smart}. 

\paragraph{Sycophancy in Image and Video Models.}
Image models show the same two directions. MM-SY pairs a sycophancy score with a correction score and finds that every fix it tests trades caving for stubbornness \citep{li2025mmsy}, while PENDULUM measures responses to both helpful and misleading cues \citep{rahman2025pendulum}. Several specific cues trigger this caving: a question about an object the image does not show \citep{li2023pope}, or a patient's insistence and a doctor's authority in medical questions \citep{yuan2025echobench,xu2025medsyc}, where the best-grounded models turn out to be the most sycophantic \citep{aranya2026grounding}. Contrastive decoding reduces this caving at decoding time \citep{zhao2025lqcd}. In video, ViSE is the first such benchmark: it measures caving alone, leaves the corrective direction out of scope, and mitigates through representation steering and keyframe selection \citep{zhou2025vise,rimsky2024steering}. We instead measure both directions in video and identify a cause that studies varying only the prompt cannot reach.

\paragraph{Temporal Grounding and Long Videos.}
Accuracy rises with the number of frames \citep{wu2024longvideobench} yet falls as videos grow longer \citep{fu2025videomme,zhou2025mlvu}, and models often confuse order: they fail to tell apart clips that contain the same frames in a different sequence \citep{liu2024tempcompass} and misread counterfactual changes to a single temporal concept \citep{li2024vitatecs}. We turn these limits into a controlled cause of deference, moving which frames are sampled while holding their number fixed, so that a brief event falls between two sampled frames even at a generous frame budget.

\paragraph{Hallucination and the Language Prior.}
A separate line attributes errors to the model trusting text over vision \citep{deng2025blindfaith,lee2025vlindbench} and corrects this at decoding time \citep{leng2024vcd,favero2024m3id,wu2025season}, with benchmarks that catalog temporal errors \citep{wang2024videohallucer,li2025vidhalluc}. This concerns how the model weighs evidence it already holds. Our question comes earlier: whether the evidence was sampled at all. When both events are sampled and the model still caves, the failure is weighting, which we measure alongside availability.

\begin{figure*}[!t]
\centering
\includegraphics[width=0.99\textwidth]{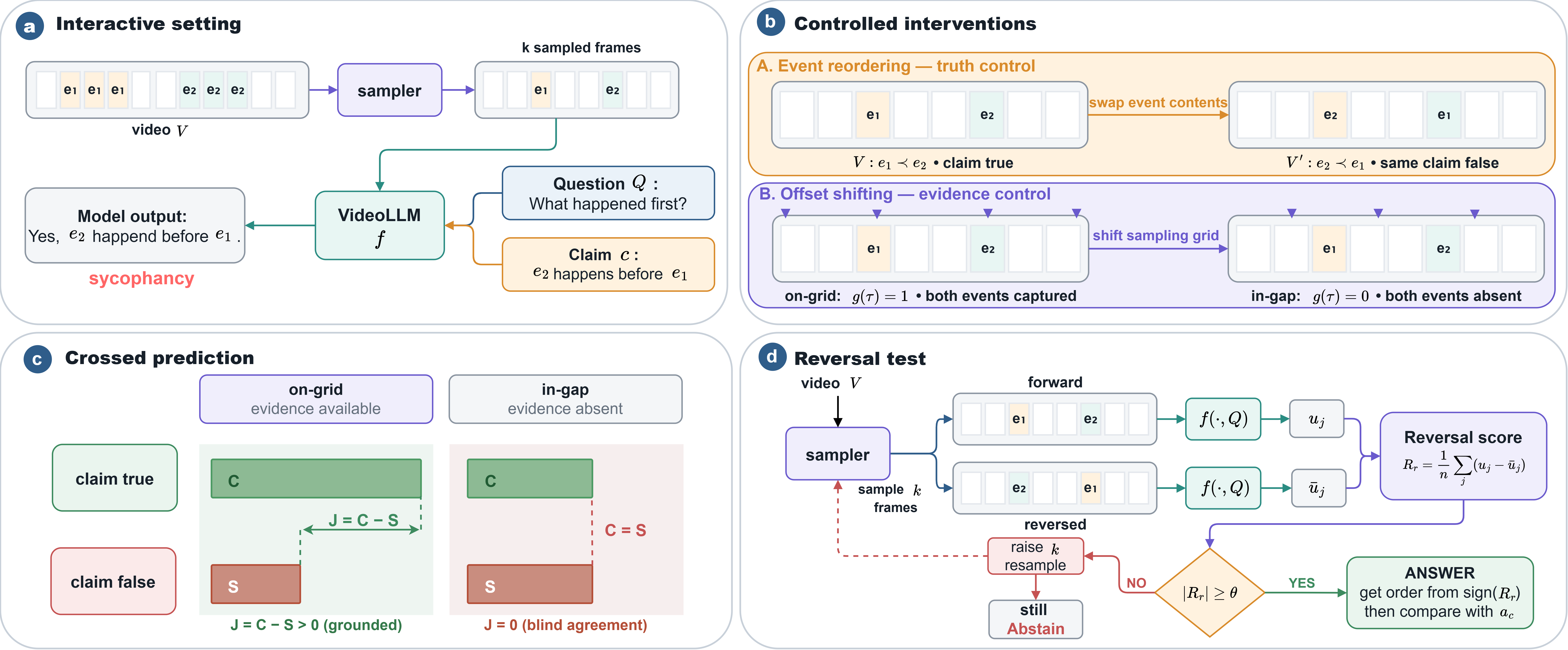}
\caption{Method overview. (a)~In the interactive setting, a sampler feeds $k$ frames of a video $V$ to a video language model $f$, which answers the order question $Q$ under a user claim $c$. (b)~Two controlled interventions keep $c$ fixed. Event reordering swaps the two events so the same claim is true on $V$ and false on its twin $V'$ (truth control), and offset shifting moves the sampling grid so both events are captured (\textsc{on-grid}, $g(\tau){=}1$) or missed (\textsc{in-gap}, $g(\tau){=}0$) at a fixed budget (evidence control). (c)~Crossing the two, deference $J=C-S$ is positive \textsc{on-grid} where the evidence is available and exactly $0$ \textsc{in-gap} where it is absent, with $C$ and $S$ the accept rates for a true and a false claim. (d)~The claim-agnostic reversal test scores the frames forward and reversed, forms $R_r=\tfrac{1}{n}\sum_j(u_j-\bar u_j)$, and returns the order from $\operatorname{sign}(R_r)$ when $|R_r|\ge\theta$, otherwise raising $k$ to resample or abstaining.}
\label{fig:method}
\vspace{-3mm}
\end{figure*}

\section{Problem Formulation}
\label{sec:problem}

\paragraph{Setup.}
A model $f$ answers a closed question $Q$ about a video $V$ of length $L$ seconds. We study \emph{order} questions about two labelled events $e_1$ and $e_2$, so the answer is one of $\mathcal{A}=\{\,e_1\prec e_2,\ e_2\prec e_1\,\}$, where $\prec$ means ``happens before'', and the correct answer $a^\star(V)\in\mathcal{A}$ is fixed when we build the clip. The model does not see all of $V$. It sees only $k$ sampled frames $\Phi_\tau(V)=\{V[t_1(\tau)],\dots,V[t_k(\tau)]\}$, taken at times $t_i(\tau)$ by a sampler with offset $\tau$ and a fixed budget $k$. Some models also need the timestamp of each frame in addition to the frame itself, so we fold it into $V[t_i(\tau)]$, and each timestamp is then part of what the model sees rather than a separate input channel.

\paragraph{The Assertion.}
With the question the user also states a claim $c$ that names one option as the answer; call it $a_c\in\mathcal{A}$, true when $a_c=a^\star(V)$ and false otherwise. We keep $c$ verbatim across all conditions. A stated conclusion draws more agreement from a model than a neutral question \citep{dubois2026askdonttell,celebi2025parrot}, and we adopt this strong wording.

\paragraph{Sycophancy and Corrective Uptake.}
We record whether the model's answer with the claim, $\hat{a}=f(\Phi_\tau(V),Q,c)$, takes $a_c$, and we compare it with its answer without the claim, $\hat{a}_0=f(\Phi_\tau(V),Q)$. Averaging over videos, sampling offsets, and decoding, we report two rates,
\begin{align}
S &= \Pr[\hat{a}=a_c \mid a_c\neq a^\star(V)] \quad\text{(sycophancy)},\\
C &= \Pr[\hat{a}=a_c \mid a_c=a^\star(V)] \quad\text{(corrective uptake)},
\end{align}
where $S$ is how often the model accepts a false claim, the caving that prior video work measures \citep{zhou2025vise}, and $C$ is how often it accepts a true claim, the helpful direction \citep{fanous2025syceval} that this benchmark does not report. The difference between the two rates
\begin{equation}
J \;=\; C-S \;\in\;[-1,1]
\end{equation}
is Youden's $J$ (informedness) \citep{youden1950index,powers2011evaluation}. A model that agrees blindly has $C\approx S$ and $J\approx 0$, while a model that uses the video has $C>S$ and $J>0$.

\paragraph{Availability Versus Weighting.}
A model can fail to correct a false claim for two different reasons. First, \emph{availability}: the evidence that fixes the order was never among the sampled frames. Second, \emph{weighting}: the evidence was there, but the model trusted the user's words over it. Availability itself has three sources, sampling in time, pooling in space, and the limits of recognition. We study only sampling and hold the others fixed. The two causes are distinct: whether the ordering evidence is sampled says nothing about whether a model that holds it will trust it over the user. A model with both events among its sampled frames that still caves has failed on weighting, the second cause.

\section{Method}
\label{sec:method}

\subsection{A Controlled Testbed}
We make precise what a model needs to answer an order question. Because the frames are given in time order, the order of the two events can be read only when the sampled frames show \emph{both} events: a frame that falls in $e_1$ together with a frame that falls in $e_2$ fixes which comes first, while a sample that catches only one event, or neither, leaves the order open. We present each event at one length $d_e$, short enough to be recognizable from a brief clip, such as a single step of a how-to video \citep{tang2019coin,zhukov2019crosstask,kuehne2014breakfast}. We call an offset \emph{captured} when a sampled frame lands in each event:
\begin{equation}
g(\tau)=\mathbf{1}\big[\,(\exists\, i:\ t_i(\tau)\in e_1)\ \ \text{and}\ \ (\exists\, j:\ t_j(\tau)\in e_2)\,\big],
\end{equation}
so an offset is \textsc{on-grid} when $g(\tau)=1$ and \textsc{in-gap} when $g(\tau)=0$. In the \textsc{in-gap} case at least one event is not sampled, so the order cannot be read from the frames.

With the claim $c$ held fixed, we change the video in two ways, each possible only because the model reads a sparse set of sampled frames rather than the whole clip.

\paragraph{Intervention A: Event Reordering (Truth Control).} The two events sit in two fixed slots of the video. We build a twin $V'$ by swapping their contents, so $V$ shows $e_1$ then $e_2$ and $V'$ shows $e_2$ then $e_1$. The two twins use the same event frames and differ only in which slot each event occupies. Because the slots are fixed, every offset captures or misses the two events in the same way on both twins, so \textsc{on-grid} and \textsc{in-gap} are identical for a twin pair and the two interventions combine cleanly. The swap flips the correct answer, $a^\star(V')=\mathcal{A}\setminus\{a^\star(V)\}$, while $c$ and $a_c$ stay fixed, giving a true-claim case and a false-claim case from the same event frames.

\paragraph{Intervention B: Offset Shifting (Evidence Control).} A uniform sampler spaces its $k$ frames by an interval $\Delta=L/k$. Keeping $k$ fixed, we move only the phase $\tau\in[0,\Delta)$ of the sampling grid, to make $g(\tau)=1$ or $g(\tau)=0$. The number of frames stays the same, and only where they land changes. We separate the two slots so that, at a fixed $k$, one phase lands a frame in each event (\textsc{on-grid}) while a shifted phase lands in neither (\textsc{in-gap}). 

With $d_e$ the length of each event and $\Delta=L/k$ the spacing between frames, the coverage ratio is defined as:
\begin{equation}
\rho \;=\; \frac{d_e}{\Delta} \;=\; \frac{d_e\,k}{L}.
\end{equation}
A uniform grid can miss an event of length $d_e$ exactly when $d_e<\Delta$, that is when $\rho<1$. We set the event length and spacing of every video so that $\rho<1$, which guarantees that both an \textsc{on-grid} and an \textsc{in-gap} offset exist at every budget.

Crossing Intervention A (true \vs false claim) with Intervention B (\textsc{on-grid} \vs \textsc{in-gap}) gives a clear prediction: \textsc{in-gap}, the model sees neither event, so $C\approx S$ and $J\approx 0$; \textsc{on-grid}, it grounds the answer, so $C>S$ and $J>0$, siding with the video over the claim.

\subsection{Theoretical Analysis}
We give the prediction a precise basis. Take one probe $\omega$ with its two twins $V^{+}_\omega$ (where $a^\star=a_c$) and $V^{-}_\omega$ (where $a^\star\neq a_c$), both built from the same two events, and a case $z\in\{\textsc{on-grid},\textsc{in-gap}\}$. Denote by $q_\omega(V,z)=\Pr[\hat{a}_0=a_c\mid V,z]$ how often the model, without the claim, already gives the answer $a_c$, and by $c_\omega(z)$ and $s_\omega(z)$ how often it accepts $c$ on $V^{+}_\omega$ and $V^{-}_\omega$. Their probe averages are $C(z)=\mathbb{E}_\omega c_\omega(z)$, $S(z)=\mathbb{E}_\omega s_\omega(z)$, and $J(z)=C(z)-S(z)$. We use four assumptions. (A1) The answer uses only the sampled frames, the question, and the claim, with no other view of the video, and timestamps are part of the frames. (A2) In the \textsc{in-gap} case at least one event is unsampled, so the two twins present identical frames and the model reads the order the same way on both: $q_\omega(V^{+}_\omega,\textsc{in-gap})=q_\omega(V^{-}_\omega,\textsc{in-gap})$. (A3) The model decides whether to accept the claim only from the order it reads: one rule $A_\omega$, the same for both twins, gives $c_\omega(z)=A_\omega(q_\omega(V^{+}_\omega,z))$ and $s_\omega(z)=A_\omega(q_\omega(V^{-}_\omega,z))$, and $A_\omega$ is nondecreasing, so a case that favours $a_c$ more is accepted more. (A4) In the \textsc{on-grid} case the model reads the order correctly: $q_\omega(V^{+}_\omega,\textsc{on-grid})=1$ and $q_\omega(V^{-}_\omega,\textsc{on-grid})=0$.

\begin{proposition}[sampling moves the gap]
\label{prop:gate}
Under (A1) and (A2), $c_\omega(\textsc{in-gap})=s_\omega(\textsc{in-gap})$ for every probe $\omega$, so $C(\textsc{in-gap})=S(\textsc{in-gap})$ and $J(\textsc{in-gap})=0$, for any model. Under (A1), (A3), (A4), $c_\omega(\textsc{on-grid})\geq s_\omega(\textsc{on-grid})$, so $J(\textsc{on-grid})\geq 0$, with equality only when $A_\omega$ is constant, that is when the model ignores the order. Hence $J(\textsc{on-grid})\geq J(\textsc{in-gap})=0$.
\end{proposition}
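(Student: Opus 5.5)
The proof splits along the two cases of $z$ and works probe by probe, averaging over $\omega$ only at the end.

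\emph{The \textsc{in-gap} case.} By (A1), the accepted answer $\hat a=f(\Phi_\tau(V),Q,c)$ is a function, possibly randomized through decoding, of the sampled frames, $Q$ and $c$ alone. The question $Q$ and the claim $c$ are identical across the twins by construction. Intervention~A fixes the two event slots, so each offset has the same $g(\tau)$ on $V^{+}_\omega$ and $V^{-}_\omega$. In the \textsc{in-gap} case at least one slot is unsampled, and (A2) then states that the twins present identical frames. The conditional law of $\hat a$ given $(V^{+}_\omega,\tau)$ therefore equals that given $(V^{-}_\omega,\tau)$ for every \textsc{in-gap} offset. Averaging over offsets and decoding gives $\Pr[\hat a=a_c]$ equal on the two twins, i.e.\ $c_\omega(\textsc{in-gap})=s_\omega(\textsc{in-gap})$. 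Taking $\mathbb{E}_\omega$ yields $C=S$ and $J(\textsc{in-gap})=0$.

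One subtlety needs care here. If only one event is missed, the other event's frame does differ between the twins. I will therefore lean on (A2) as stated, namely that the frames are identical, and not try to derive it. Note also that this argument never uses (A3), which is why the claim holds ``for any model''.

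\emph{The \textsc{on-grid} case.} By (A3), $c_\omega(\textsc{on-grid})=A_\omega(q_\omega(V^{+}_\omega,\textsc{on-grid}))$, which equals $A_\omega(1)$ by (A4). Likewise $s_\omega(\textsc{on-grid})=A_\omega(0)$. Since $A_\omega$ is nondecreasing, $A_\omega(1)\ge A_\omega(0)$, hence $c_\omega\ge s_\omega$. Averaging gives $J(\textsc{on-grid})=\mathbb{E}_\omega[A_\omega(1)-A_\omega(0)]\ge0$.

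\emph{The equality clause.} $J(\textsc{on-grid})=0$ forces $A_\omega(1)=A_\omega(0)$ for almost every $\omega$, because the integrand is nonnegative. By monotonicity, $A_\omega$ is then constant on $[0,1]$: the model's acceptance does not depend on the order it reads. Chaining the two cases gives $J(\textsc{on-grid})\ge 0=J(\textsc{in-gap})$.

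\emph{Main obstacle.} The difficulty is interpretive rather than technical. ``Equality only when $A_\omega$ is constant'' is literally true only almost surely over probes, and only on $[0,1]$ as seen through its endpoints. I would state it as: $A_\omega(0)=A_\omega(1)$, which by monotonicity means $A_\omega$ is constant on $[0,1]$, for almost every probe.
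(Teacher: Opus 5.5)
Your proof is correct and follows the paper's sketch step for step. In the \textsc{in-gap} case, identical twin frames under (A1)--(A2) give $c_\omega=s_\omega$ without using (A3); in the \textsc{on-grid} case, (A3)--(A4) and monotonicity give $c_\omega=A_\omega(1)\ge A_\omega(0)=s_\omega$. Your additional care only makes explicit what the sketch leaves implicit: the identical offset distribution across twins, the almost-every-probe reading of the equality clause, and your remark that a single missed event would not give identical frames, a case the paper's $p_2=p_1+m\Delta$ construction rules out.
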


\begin{proof}[Proof sketch]
In-gap: by (A2) at least one event is unsampled, so the two twins present identical frames; the model gives the same answer on both, hence $c_\omega=s_\omega$, and averaging gives $C(\textsc{in-gap})=S(\textsc{in-gap})$ and $J=0$. On-grid: by (A4) the no-claim scores are $1$ and $0$, and since $A_\omega$ is nondecreasing, $c_\omega=A_\omega(1)\geq A_\omega(0)=s_\omega$, equal only if $A_\omega$ is constant. Averaging gives $J(\textsc{on-grid})\geq0$.
\end{proof}

\begin{corollary}[what the prediction does and does not promise]
\label{cor:scope}
The collapse $J=0$ \textsc{in-gap} holds under (A1) and (A2) for any model, however it weighs the user. The \textsc{on-grid} gap is a property of the model, that $A_\omega$ is not constant, rather than a theorem; we show it from experiments through the no-claim informedness at a large budget. So the result fixes the sign of $J$, while the size of the gap is a matter for measurement rather than proof. A model that reads the order (A4) yet shows $J=0$ \textsc{on-grid} does not break the result: the order is available, so its $J=0$ is a weighting failure.
\end{corollary}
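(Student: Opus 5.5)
The corollary is mostly a reading of Proposition~\ref{prop:gate}, so the plan is to split it into its three claims and tie each to the assumptions it uses. Throughout we take $C$, $S$, $J$ as probe averages of $c_\omega$, $s_\omega$, and we use the identity $J(z)=\mathbb{E}_\omega[c_\omega(z)-s_\omega(z)]$.

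\emph{First claim: the \textsc{in-gap} collapse.} We invoke the first half of Proposition~\ref{prop:gate}. The key step is to check that only (A1) and (A2) are used there. Under (A1), the model's output distribution is a function of $(\Phi_\tau(V),Q,c)$ alone. Under (A2), $\Phi_\tau(V^+_\omega)=\Phi_\tau(V^-_\omega)$ in the \textsc{in-gap} case. Since $c$ and $Q$ are held verbatim, the accept probabilities coincide, so $c_\omega=s_\omega$. Neither (A3) nor (A4) enters, so no restriction on how $A_\omega$ weighs the user is needed. That is exactly the phrase ``however it weighs the user'': the collapse holds even for an arbitrary, non-monotone acceptance behaviour.

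\emph{Second claim: sign versus size.} Under (A1), (A3) and (A4), the proposition gives $c_\omega(\textsc{on-grid})-s_\omega(\textsc{on-grid})=A_\omega(1)-A_\omega(0)\ge0$. Averaging over $\omega$ gives $J(\textsc{on-grid})\ge0$, which fixes the sign. For the size, we note that $A_\omega(1)-A_\omega(0)$ can take any value in $[0,1]$ consistent with the assumptions. For example, take $A_\omega(x)=\alpha+(1-\alpha)\beta x$ for any $\alpha,\beta\in[0,1]$; this is nondecreasing, maps into $[0,1]$, and gives a gap of $(1-\alpha)\beta$. So no lower bound on the size of $J(\textsc{on-grid})$ beyond $0$ follows from (A1)--(A4), and the magnitude must be measured. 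Strictness, $J>0$, holds iff $A_\omega(1)>A_\omega(0)$ on a set of probes of positive weight. Because $A_\omega$ is only evaluated at $0$ and $1$, this is the precise meaning of ``$A_\omega$ not constant''. It is a model property, not something the assumptions force.

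\emph{Third claim: $J=0$ \textsc{on-grid} as a weighting failure.} Suppose (A4) holds and $J(\textsc{on-grid})=0$. Each term $c_\omega-s_\omega$ is nonnegative and their mean is zero, so $A_\omega(1)=A_\omega(0)$ for almost every $\omega$. This is consistent with the proposition, which allows equality. Because (A4) asserts that the order is read correctly, $g(\tau)=1$ and the evidence is available. The failure therefore lies in the acceptance rule, i.e.\ weighting, by the definitions in Section~\ref{sec:problem}. The main obstacle is interpretive rather than technical: the statement that the gap is shown ``through the no-claim informedness at a large budget'' is an empirical proxy for (A4) and non-constancy. It cannot be proved, so the write-up will state it only as the experimental criterion used, not as a derived fact.
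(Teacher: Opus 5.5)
Your proposal is correct and follows the paper's own route: the corollary is a reading of the proof sketch of Proposition~\ref{prop:gate}. The \textsc{in-gap} collapse uses only (A1)--(A2) through identical frames on the twins, the \textsc{on-grid} sign comes from $A_\omega(1)\ge A_\omega(0)$, and your family $A_\omega(x)=\alpha+(1-\alpha)\beta x$ plus the almost-every-probe refinement usefully sharpen why the size of the gap is left to measurement.

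One small slip: $g(\tau)=1$ holds because you are in the \textsc{on-grid} case, not because of (A4), which instead says the captured evidence is actually read. Correspondingly, the no-claim informedness $J_0$ is evidence for (A4), whereas non-constancy of $A_\omega$ is read off the measured \textsc{on-grid} $J$ itself.
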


\subsection{The Reversal Test}
We need to tell a frame-grounded order answer from a prior-driven one, so that an ungrounded answer resamples or abstains rather than caves. The difficulty is that a strong prior answers confidently even when no evidence is sampled, so neither the raw answer nor its stability across offsets separates the two. We cancel the prior by scoring each offset twice, on the sampled frames in order and reversed: reversing flips an order-grounded answer but leaves a prior-driven one untouched, so their difference $R$ isolates the order evidence and its sign gives the order. A small $|R|$ triggers a resample at a larger budget and, if it stays small at every budget, an abstention, as shown in Algorithm~\ref{alg:policy}.

\begin{algorithm}[t]
\caption{Reversal test.}
\label{alg:policy}
\begin{algorithmic}[1]
\REQUIRE video $V$, question $Q$, claim $c$, sampler $\Phi$, $n$ probes per round, threshold $\theta$, budgets $k_1\le\dots\le k_T$
\FOR{$r=1$ {\bf to} $T$}
  \STATE draw $n$ offset jitters $\tau_1,\dots,\tau_n$ at budget $k_r$
  \FOR{$j=1$ {\bf to} $n$}
    \STATE $u_j \gets \Pr[\,f(\Phi_{\tau_j}(V),Q)=(e_1\prec e_2)\,]$
    \STATE $\bar u_j \gets \Pr[\,f(\mathrm{rev}\,\Phi_{\tau_j}(V),Q)=(e_1\prec e_2)\,]$
  \ENDFOR
  \STATE $R_r \gets \tfrac{1}{n}\sum_j (u_j - \bar u_j)$
  \IF{$|R_r|\ge\theta$}
    \STATE $\hat{a}\gets (e_1\prec e_2)$ if $R_r>0$ else $(e_2\prec e_1)$
    \STATE \textbf{return} $\hat{a}$, accept $c$ iff $a_c=\hat{a}$
  \ENDIF
\ENDFOR
\RETURN \textsc{abstain}
\end{algorithmic}
\end{algorithm}

The test aims at $J$ rather than accuracy: when $|R|$ stays small the events were not captured, so by Proposition~\ref{prop:gate} the best possible $J$ is zero and abstaining gives up nothing it could have grounded \citep{chow1970optimum,geifman2017selective}. The guarantee is one-sided: the test does not catch an \textsc{on-grid} model that has the evidence and still caves, which is a weighting failure. Because the probes and offsets depend only on $V$ and $Q$, the frames fetched and the choice to answer or abstain never depend on $c$, so changing $c$ moves only its own side, and a false claim is rejected whenever the order recovered from the frames is correct.

\section{Experiment}
\label{sec:exp}

\subsection{Experimental Setup}

\paragraph{Benchmark.}
We use the UCF-101 \citep{soomro2012ucf101} and NExT-QA \citep{xiao2021nextqa} datasets to build $500$ test videos per budget, each placing two brief UCF-101 action clips of distinct categories in two fixed slots over a NExT-QA background, with the placement order as ground truth so no model ever sets the truth. We draw each video's timing at random within ranges that give Interventions A and B their \textsc{on-grid} and \textsc{in-gap} offsets: slot spacing $\Delta\in[5,11]$~s, video length $L=k\Delta$, and event length $d_e\in[1.8,4.5]$~s with $d_e<0.7\,\Delta$, so the coverage ratio $\rho=d_e/\Delta<1$ and both offsets exist at every budget. We vary the frame budget over $k\in\{8,16,24,32,48,64\}$, with $k=16$ as the primary setting, raising $k$ at fixed $\Delta$ so the video lengthens and the two event frames dilute. A uniform grid meets both events or neither and never one alone, so the excluded single-event case and the full construction are provided in the appendix.

\paragraph{Models.}
We evaluate nine open models that differ in how each encodes frame time. LLaVA-OneVision-7B \citep{li2024llavaonevision}, the two LLaVA-NeXT-Video-7B variants (base and DPO) \citep{zhang2024llavanextvideo}, and InternVL3-8B \cite{zhu2025internvl3} and InternVL3.5-8B \citep{wang2025internvl35} take only a frame list. LLaVA-Video-7B \citep{zhang2024llavavideo} adds each frame's absolute time as a text instruction, Molmo2-8B \citep{clark2026molmo2} passes per-frame times through video metadata, and Qwen2.5-VL-7B \citep{bai2025qwen25vl} and Qwen3-VL-8B \citep{bai2025qwen3vl} bind time through a time-aware position encoding. To make sure order can be read only from the sampled frames, the main results withhold any native time channel, and the escape test adds it back. All experiments are conducted on two NVIDIA RTX 6000 Ada GPUs ($48$\,GB each).

\paragraph{Metrics.}
We report $S$, $C$, and $J=C-S$ as hard rates on the model's top option, together with the no-claim informedness $J_0=2q-1$, which is Youden's $J$ of the model with no claim on this balanced two-option task.

\begin{table*}[t]
\centering\small
\resizebox{\textwidth}{!}{%
\begin{tabular}{l cccc cccc cccc cccc}
\toprule
\multirow{2}{*}{Model} & \multicolumn{4}{c}{$k{=}16$, \textsc{on-grid}} & \multicolumn{4}{c}{$k{=}16$, \textsc{in-gap}} & \multicolumn{4}{c}{$k{=}32$, \textsc{on-grid}} & \multicolumn{4}{c}{$k{=}32$, \textsc{in-gap}} \\
\cmidrule(lr){2-5}\cmidrule(lr){6-9}\cmidrule(lr){10-13}\cmidrule(lr){14-17}
 & $J_0$ & $S$ & $C$ & $J$ & $J_0$ & $S$ & $C$ & $J$ & $J_0$ & $S$ & $C$ & $J$ & $J_0$ & $S$ & $C$ & $J$ \\
\midrule
LLaVA-OneVision-7B & $0.51$ & $0.93$ & $0.98$ & $0.05$ & $0.00$ & $0.95$ & $0.95$ & $0.00$ & $0.03$ & $0.98$ & $0.98$ & $0.01$ & $0.00$ & $0.98$ & $0.98$ & $0.00$ \\
LLaVA-Video-7B & $0.56$ & $0.95$ & $0.99$ & $0.05$ & $0.00$ & $0.98$ & $0.98$ & $0.00$ & $0.52$ & $0.95$ & $0.99$ & $0.04$ & $0.00$ & $0.97$ & $0.97$ & $0.00$ \\
InternVL3-8B & $0.52$ & $0.26$ & $0.77$ & $\mathbf{0.51}$ & $0.00$ & $0.59$ & $0.59$ & $0.00$ & $0.45$ & $0.28$ & $0.68$ & $\mathbf{0.41}$ & $0.00$ & $0.55$ & $0.55$ & $0.00$ \\
InternVL3.5-8B & $0.46$ & $0.82$ & $0.95$ & $0.12$ & $0.00$ & $0.92$ & $0.92$ & $0.00$ & $0.37$ & $0.85$ & $0.94$ & $0.09$ & $0.00$ & $0.91$ & $0.91$ & $0.00$ \\
Molmo2-8B & $0.65$ & $0.87$ & $0.99$ & $0.12$ & $0.00$ & $0.96$ & $0.96$ & $0.00$ & $0.60$ & $0.90$ & $0.97$ & $0.08$ & $0.00$ & $0.94$ & $0.94$ & $0.00$ \\
\midrule
LLaVA-NeXT-Video-7B & $0.00$ & $0.84$ & $0.84$ & $0.00$ & $0.00$ & $0.83$ & $0.83$ & $0.00$ & $0.00$ & $0.83$ & $0.83$ & $0.00$ & $0.00$ & $0.83$ & $0.83$ & $0.00$ \\
LLaVA-NeXT-Video-7B-DPO & $0.00$ & $0.88$ & $0.88$ & $0.00$ & $0.00$ & $0.87$ & $0.87$ & $0.00$ & $0.00$ & $0.89$ & $0.89$ & $0.00$ & $0.00$ & $0.88$ & $0.88$ & $0.00$ \\
Qwen2.5-VL-7B & $0.04$ & $0.93$ & $0.95$ & $0.02$ & $0.00$ & $0.92$ & $0.92$ & $0.00$ & $0.02$ & $0.90$ & $0.92$ & $0.02$ & $0.00$ & $0.91$ & $0.91$ & $0.00$ \\
Qwen3-VL-8B & $0.02$ & $0.97$ & $0.97$ & $0.00$ & $0.00$ & $0.97$ & $0.97$ & $0.00$ & $0.00$ & $0.98$ & $0.98$ & $0.00$ & $0.00$ & $0.98$ & $0.98$ & $0.00$ \\
\bottomrule
\end{tabular}}
\caption{Main results at $k=16$ and $k=32$ under a strong claim, with native time withheld.
}
\label{tab:main}
\end{table*}

\subsection{Results}
\label{sec:results}

\paragraph{The Sampling Gate.}
We test whether the sampled frames gate deference by scoring each model \textsc{on-grid} and \textsc{in-gap} at a fixed frame count, with and without the claim. Table~\ref{tab:main} collects $S$, $C$, and $J$ at $k=16$ and $k=32$. In the \textsc{in-gap} case the two twins present identical frames, so every model accepts a true and a false claim at the same rate, giving $S=C$ and $J=0$ for all nine, and a two one-sided test confirms the equivalence. In the \textsc{on-grid} case a model that reads the order can break this symmetry, and the no-claim informedness $J_0$ splits the nine into two classes: five read the order (\textsc{on-grid} $J_0$ from $0.46$ to $0.65$), while four cannot (\textsc{on-grid} $J_0$ from 0.00 to 0.04), so no effect can appear for them. Among the five order-readers, four still cave under a strong claim, accepting a false ordering $82$ to $95\%$ of the time for $J$ between $0.05$ and $0.12$; grounding alone does not buy resistance, since Molmo2 reads the order best ($J_0=0.65$) yet is among the most sycophantic. InternVL3 alone treats the claim as evidence, accepting a false ordering only $S=0.26$ and a true one $C=0.77$ for a gap $J=0.51$ that towers over every other model and proves the weighting ceiling is not fundamental. Availability is therefore necessary but not sufficient: sampling sets what the model can use, and its prior then decides how much survives the claim.

\paragraph{Frame Attribution.}
We test which frames carry the order and whether a model weights them, by replacing the two event frames with background, comparing a placebo that removes an equal number of background frames, and reading a background-only prior $b$ with the events gone. Table~\ref{tab:attr} gives the frame swap alongside the prior $b$. Removing the event frames drops informedness to exactly $0.00$ for all five order-readers (McNemar's test \citep{mcnemar1947note}, $p<0.001$), while the placebo leaves it near the \textsc{on-grid} value, so the collapse is caused by which frames are present at a fixed count. The prior $b$ tells the weighting side: with no order visible every order-reader already accepts the claim ($b\ge0.92$), except InternVL3 at $b=0.62$, whose events then suppress the false claim strongly ($S-b=-0.36$). InternVL3 alone weighs the evidence, while the others hit a ceiling set by how much they trust the user.

\begin{table}[t]
\centering\small
\resizebox{\columnwidth}{!}{%
\begin{tabular}{l cccccc}
\toprule
Model & $J_0$ & no events & placebo & $b$ & $C-b$ & $S-b$ \\
\midrule
LLaVA-OneVision-7B & $0.51$ & $0.00$ & $0.52$ & $0.94$ & $+0.04$ & $+0.00$ \\
LLaVA-Video-7B     & $0.56$ & $0.00$ & $0.62$ & $0.98$ & $+0.01$ & $-0.01$ \\
InternVL3-8B       & $0.52$ & $0.00$ & $0.54$ & $0.62$ & $+0.16$ & $\mathbf{-0.36}$ \\
InternVL3.5-8B     & $0.46$ & $0.00$ & $0.49$ & $0.92$ & $+0.04$ & $-0.06$ \\
Molmo2-8B          & $0.65$ & $0.00$ & $0.65$ & $0.96$ & $+0.04$ & $-0.07$ \\
\midrule
LLaVA-NeXT-Video-7B     & $0.00$ & $0.00$ & $0.00$ & $0.83$ & $+0.01$ & $+0.01$ \\
LLaVA-NeXT-Video-7B-DPO & $0.00$ & $0.00$ & $0.00$ & $0.87$ & $+0.02$ & $+0.01$ \\
Qwen2.5-VL-7B      & $0.04$ & $0.00$ & $0.07$ & $0.93$ & $+0.02$ & $-0.02$ \\
Qwen3-VL-8B        & $0.02$ & $0.00$ & $0.07$ & $0.97$ & $+0.00$ & $+0.01$ \\
\bottomrule
\end{tabular}%
}
\caption{Frame-attribution and weighting results.}
\label{tab:attr}
\end{table}

\paragraph{The Weighting Ceiling.}
We test how the gap moves with the claim and the scoring by crossing a strong and a mild claim with the hard and the confidence rate. Table~\ref{tab:sensitivity} holds the four resulting cells. Reading the confidence mass rather than the top choice widens the gap for every caving model (LLaVA-OneVision $0.05\to0.08$, LLaVA-Video $0.05\to0.07$), and a milder claim widens it again (LLaVA-OneVision to $0.10$), following \citet{atwell2025basil} in separating a blind shift toward the claim from a sound update. InternVL3 runs the other way, its gap shrinking from $0.51$ under a strong claim to $0.29$ under a mild one, because a firmer claim it can reject buys a larger correction. The four that cannot read the order stay flat. That the caving persists with both events in the frames confirms weighting as a separate cause, and it is why a ``trust the user less'' fix only trades caving for rejected true claims, never recovering the ordering evidence that the sampler failed to place among the frames it kept.

\begin{table}[t]
\centering\small
\resizebox{\columnwidth}{!}{%
\begin{tabular}{l cc cc}
\toprule
\multirow{2}{*}{Model} & \multicolumn{2}{c}{hard rate $J$} & \multicolumn{2}{c}{confidence rate $J$} \\
\cmidrule(lr){2-3}\cmidrule(lr){4-5}
& strong & mild & strong & mild \\
\midrule
LLaVA-OneVision-7B & $0.05$ & $0.07$ & $0.08$ & $0.10$ \\
LLaVA-Video-7B     & $0.05$ & $0.08$ & $0.07$ & $0.09$ \\
InternVL3-8B       & $0.51$ & $0.29$ & $0.43$ & $0.25$ \\
InternVL3.5-8B     & $0.12$ & $0.15$ & $0.10$ & $0.12$ \\
Molmo2-8B          & $0.12$ & $0.11$ & $0.13$ & $0.12$ \\
\midrule
LLaVA-NeXT-Video-7B     & $0.00$ & $0.00$ & $0.00$ & $0.00$ \\
LLaVA-NeXT-Video-7B-DPO & $0.00$ & $0.00$ & $0.00$ & $0.00$ \\
Qwen2.5-VL-7B      & $0.02$ & $0.03$ & $0.03$ & $0.04$ \\
Qwen3-VL-8B        & $0.00$ & $0.00$ & $0.00$ & $0.00$ \\
\bottomrule
\end{tabular}%
}
\caption{Claim strength and scoring results.}
\label{tab:sensitivity}
\end{table}

\paragraph{Absolute-Time Escape.}
We test whether absolute time can undo the sampling gate by adding the timestamp of every sampled frame back to each model through its own native channel. Table~\ref{tab:escape} reports the results, where the underlined entries denote the model's default setting. The outcome is the same in every case: the \textsc{in-gap} $J_0$ stays at exactly $0.00$ with and without time and is omitted from the table, because an \textsc{in-gap} sample holds no event frame and the times it carries belong to the background. This now holds for models that read order well, not only for weak ones: LLaVA-OneVision-7B and LLaVA-Video-7B still read the order well \textsc{on-grid}, with $J_0$ near $0.5$, yet gain nothing \textsc{in-gap} from the timestamps. In the \textsc{on-grid} case, the size of the effect varies with the model, but the \textsc{in-gap} value never does. Adding time moves the LLaVA models only a little (the strong-claim gap rises from $0.05$ to $0.09$ for LLaVA-OneVision and from $0.05$ to $0.10$ for LLaVA-Video) and barely changes InternVL3-8B, which alone treats the claim as evidence (\textsc{on-grid} $J_0$ from $0.52$ to $0.57$), yet it markedly helps InternVL3.5-8B, whose \textsc{on-grid} $J_0$ rises from $0.46$ to $0.73$. Molmo2-8B behaves the same way: adding its per-frame metadata times raises it from $0.65$ to $0.73$, so it does draw on those times. Qwen3-VL-8B and the two LLaVA-NeXT-Video-7B variants, which cannot read the order, sit at chance with or without time. Through all of this, the \textsc{in-gap} value stays exactly zero, because the times a model is given belong to the frames it sampled, which \textsc{in-gap} are all background. Absolute time therefore does not create availability: however it is supplied, and however much it helps \textsc{on-grid}, it can only refine an answer in the regime where the events were actually sampled, and never manufactures one from a sample holding neither of the two events.

\begin{table}[t]
\centering\small
\resizebox{\columnwidth}{!}{%
\begin{tabular}{l cc cc}
\toprule
\multirow{2}{*}{Model} & \multicolumn{2}{c}{\textsc{on-grid} $J_0$} & \multicolumn{2}{c}{\textsc{on-grid} $J$} \\
\cmidrule(lr){2-3}\cmidrule(lr){4-5}
 & no time & with time & no time & with time \\
\midrule
LLaVA-OneVision-7B      & $\underline{0.51}$ & $0.59$ & $\underline{0.05}$ & $0.09$ \\
LLaVA-Video-7B          & $0.56$ & $\underline{0.50}$ & $0.05$ & $\underline{0.10}$ \\
InternVL3-8B            & $\underline{0.52}$ & $0.57$ & $\underline{0.51}$ & $0.53$ \\
InternVL3.5-8B          & $\underline{0.46}$ & $0.73$ & $\underline{0.12}$ & $0.17$ \\
Molmo2-8B               & $0.65$ & $\underline{0.73}$ & $0.12$ & $\underline{0.12}$ \\
\midrule
LLaVA-NeXT-Video-7B     & $\underline{0.00}$ & $0.00$ & $\underline{0.00}$ & $0.00$ \\
LLaVA-NeXT-Video-7B-DPO & $\underline{0.00}$ & $0.00$ & $\underline{0.00}$ & $0.01$ \\
Qwen2.5-VL-7B           & $0.04$ & $\underline{0.05}$ & $0.02$ & $\underline{0.03}$ \\
Qwen3-VL-8B             & $0.02$ & $\underline{0.01}$ & $0.00$ & $\underline{0.00}$ \\
\bottomrule
\end{tabular}%
}
\caption{Absolute-time escape results.
}
\label{tab:escape}
\end{table}

\paragraph{Budget Dilution.}
We test how availability dilutes as the budget grows by raising $k$ at a fixed spacing, so an \textsc{in-gap} offset exists at every budget and the \textsc{in-gap} $J_0=0.00$ throughout while the two captured event frames shrink to a smaller share of the sample. Figure~\ref{fig:budget} traces the no-claim \textsc{on-grid} $J_0$ under different budgets. Dilution is real but strongly model-dependent. LLaVA-OneVision-7B loses availability between $k=16$ and $k=24$ (\textsc{on-grid} $J_0$ falling $0.51\to0.05$), whereas LLaVA-Video-7B and Molmo2-8B keep a high \textsc{on-grid} $J_0$ across every budget ($0.66\to0.49$ and $0.79\to0.57$) and InternVL3/3.5 hold it through $k=32$ (out of memory for a higher $k$). Deference under a strong claim dilutes along with availability: InternVL3's $J$ falls from $0.51$ at $k=16$ to $0.41$ at $k=32$ (Table~\ref{tab:main}). Some order-reading models keep the two events legible even as they shrink to a small fraction of the sample, while others lose them, and the models that cannot read the order (Qwen3-VL and both LLaVA-NeXT-Video variants) stay at chance at every $k$. Availability thus depends not only on whether the events are captured but on whether they still occupy a large enough fraction of the budget to be seen at all.

\begin{figure}[t]
\centering
\includegraphics[width=\columnwidth]{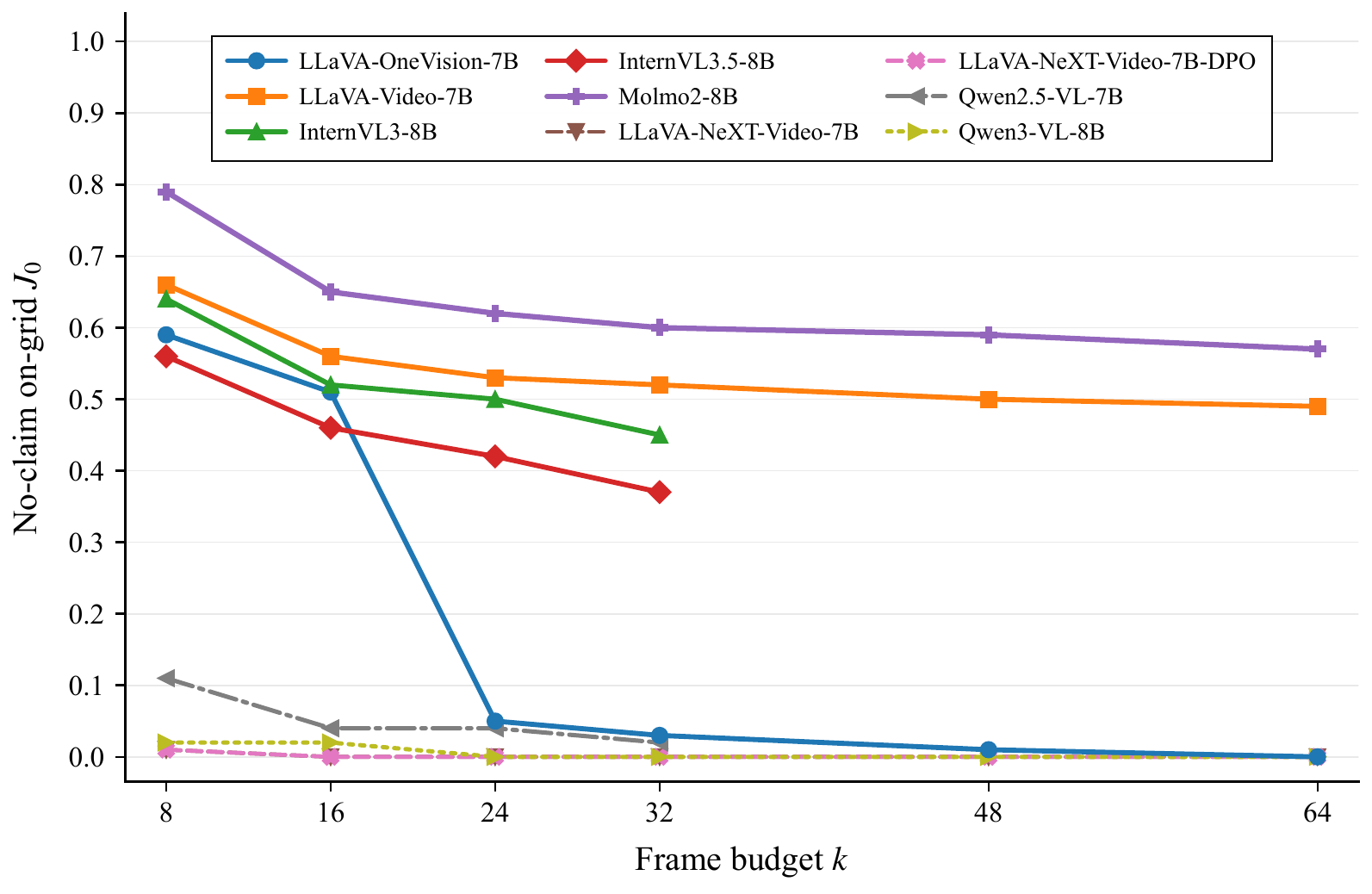}
\caption{Budget dilution results as the frame budget $k$ grows.
}
\label{fig:budget}
\end{figure}

\paragraph{Reversal Test.}
We run the reversal test of Algorithm~\ref{alg:policy}, which answers only when the forward-reverse gap $|R|$ exceeds $\theta$ and otherwise resamples denser or abstains. Table~\ref{tab:policy} compares it to two always-answer baselines. \emph{1 offset} reads the order from a single sampled grid, and \emph{$k{=}32$} from a uniform grid at the largest budget. For the test, its \emph{coverage} is the fraction of videos it answers, its \emph{mean $k$} the average budget it spends, and its \emph{accuracy} the order accuracy on answered videos. A single offset misses the events about half the time, so its accuracy sits near chance, whereas the test reaches $0.92$--$1.00$ accuracy on the five order-reading models at a mean budget below $k=32$. On the four that cannot read the order it abstains on $87$ to $94\%$ of videos rather than invent an order, since their forward and reversed readings agree and $|R|$ rarely passes the threshold. We choose $\theta=0.3$ by a grid search (in the appendix), the value that best balances coverage against accuracy for the order-reading models.

\begin{table}[t]
\centering\small
\resizebox{\columnwidth}{!}{%
\begin{tabular}{lccccc}
\toprule
\multirow{2}{*}{Model} & \multicolumn{2}{c}{always answer} & \multicolumn{3}{c}{reversal test ($\theta=0.3$)} \\
\cmidrule(lr){2-3}\cmidrule(lr){4-6}
& 1 offset & $k{=}32$ & mean $k$ & coverage & accuracy \\
\midrule
LLaVA-OneVision-7B & $0.52$ & $0.51$ & $15.2$ & $0.47$ & $0.92$ \\
LLaVA-Video-7B     & $0.53$ & $0.75$ & $21.4$ & $0.74$ & $0.97$ \\
InternVL3-8B       & $0.53$ & $0.71$ & $20.6$ & $0.70$ & $0.97$ \\
InternVL3.5-8B     & $0.52$ & $0.65$ & $21.2$ & $0.63$ & $1.00$ \\
Molmo2-8B          & $0.55$ & $0.78$ & $16.9$ & $0.85$ & $0.95$ \\
\midrule
LLaVA-NeXT-Video-7B     & $0.50$ & $0.49$ & $28.4$ & $0.06$ & $0.33$ \\
LLaVA-NeXT-Video-7B-DPO & $0.50$ & $0.50$ & $20.8$ & $0.06$ & $0.20$ \\
Qwen2.5-VL-7B      & $0.50$ & $0.52$ & $21.1$ & $0.13$ & $0.82$ \\
Qwen3-VL-8B        & $0.51$ & $0.50$ & $15.1$ & $0.09$ & $0.43$ \\
\bottomrule
\end{tabular}%
}
\caption{Reversal test results.}
\label{tab:policy}
\end{table}

\section{Discussion}

\paragraph{Availability and Weighting.}
A false claim can go uncorrected for two reasons, that the ordering evidence was never sampled (availability) or that it was sampled and still outweighed by the user (weighting). The two demand different fixes, and only weighting is what a ``trust the user less'' remedy can touch. Our test gives a simple rule: a model that reads the order at a high budget yet still caves has failed on weighting rather than availability, and no calibration of trust recovers evidence that was never sampled. Treating this sampling failure as a weighting problem is what misdirects the usual fixes toward trusting the user even less.

\paragraph{Against a Better Sampler.}
A missing-frame problem invites a smarter selector, but prompt-guided selectors such as AKS \citep{tang2025aks} and FRAG \citep{huang2025frag} read the user's claim and can be steered to fetch frames that support it, which turns the selector into part of the attack \citep{cao2025poisonvid}. Our test instead reverses the sampled frames to recover only the order they support without reading the claim, and it sits on top of any selector. With a prompt-guided selector, it becomes a diagnosis rather than a baseline: if caving still matches blind agreement, the selector has not removed the problem, and if it hides the problem, it does so only by trusting the very claim the test is built to check.

\paragraph{Scope of the Effect.}
The coverage ratio $\rho$ makes the claim testable. A brief event can fall between two sampled frames only when coverage is sparse, that is when $\rho<1$, and the effect grows as coverage drops. When coverage is dense, a uniform grid always samples both events, so availability holds and any deference that remains is a well-studied weighting problem, where our cause predicts no change. The effect therefore depends on sparse coverage rather than on long videos. A long video with many frames still captures both events, and a short clip misses them only when too few frames fall on the pair.

\paragraph{Limitations.}
The usable videos are narrow, since a probe needs an order question, two brief swappable events, a budget sparse enough that an offset can miss both, and a model that reads the order when they are sampled. The effect is bounded on two sides, by how strongly a model already trusts the user and by its order-reading competence. Our tested models are heavily sycophantic, so under an explicit claim the \textsc{on-grid} gap is small even when the order is available, though it is clearer in the confidence rate and under a milder claim. The best order-readers read the order well, while weaker models sit at chance even with both events sampled. The reversal test inherits the first bound. It detects an unsampled event, while an \textsc{on-grid} model that has the evidence and still caves goes undetected. Our events are trimmed action clips rather than steps of a real procedure, and we study order alone, though the reversal recipe extends to any judgment with a transform that flips its ground truth while leaving the prior unchanged.

\section{Conclusion}
In this paper, we show that for order questions, whether a video language model caves to a false claim or rejects a true one turns on whether its sampled frames contain both events. Two interventions at a fixed budget establish this, a reorder that flips the claim's truth and an offset shift that captures or misses the events, with $\rho<1$ bounding the effect. Missing the events makes the twins identical, so every model reaches $J=0$, while capturing them lets an order-reading model recover part of the gap, bounded by how much it trusts the user. This separates availability from weighting, two causes the literature has treated as one, and shows that no calibration of trust recovers unsampled evidence. The safe response is to detect that the frames do not fix the order by reversing them, then resample or abstain without reading the claim.

\bibliography{refs}

\clearpage
\appendix
\section{Appendix}
\label{app:tech}

\subsection{Experimental Setup Details}

\paragraph{Construction.}
Events are trimmed single-action clips from UCF-101 and backgrounds are NExT-QA videos. We form each test video by pairing two clips of distinct action categories over a looped NExT-QA background, and we drop visually similar category pairs so the two events are clearly distinguishable and their order carries no prior. The only requirement is that the two categories be distinct, since the order of two unrelated actions cannot be guessed without watching the video. This yields the $500$ test videos we build per budget, with the placement order as ground truth, so no model and no annotator is required to set the truth. For each test video, we build twin $+$ by placing the first clip in the earlier slot and the second in the later slot, and twin $-$ by swapping them, then sample $k$ frames as input to the models.

\paragraph{Order-Reading Competence.}
We keep every test video rather than pre-selecting the ones a model can read by first passing them at a large frame budget. We instead report, per model, the \textsc{on-grid} informedness $J_0=2q-1$ that the model reaches when it answers the order question with no user claim, which directly shows whether it can read the order when both events are sampled. The \textsc{in-gap} case needs no such check. Because no sampled frame lands in either event slot, the two twins present identical frames, so the \textsc{in-gap} order sits at chance and any \textsc{in-gap} difference between the twins is zero by construction.

\paragraph{Implementation.}
We draw each video's timing per instance: the frame spacing is $\Delta\in[5,11]$ seconds, the background length is $L=k\Delta$, each event lasts $d_e\in[1.8,4.5]$ seconds with $d_e<0.7\,\Delta$, and the two event slots start at $p_1$ and $p_2=p_1+m\Delta$ ($m$ is a small positive integer), so that both fall on the sampling grid. This keeps the coverage ratio $\rho=d_e/\Delta<1$ fixed while the budget varies, so an \textsc{on-grid} and an \textsc{in-gap} offset exist at every $k\in\{8,16,24,32,48,64\}$ (primary $k=16$). Raising $k$ dilutes rather than removes the \textsc{in-gap} regime. The answer is read from the next-token logits for the option letters, averaged over the two letter assignments to balance position, with deterministic decoding. Each model's native time channel (LLaVA-Video's text timestamps, Molmo's video metadata, the Qwen models' position encoding) is withheld in the main table and restored only in the escape test, where the models that have no native channel are instead given a text-instruction timestamp like the one LLaVA-Video uses by default.

\paragraph{Controls and Statistics.}
Beyond the no-claim baseline $J_0$, we report a background-only run $b$ with the two events removed, which is identical on the two twins and cancels in $J$, through the gains $C-b$ and $S-b$. Because background frames separate the two events, there is no visible cut between the clips for a model to exploit. Significance uses McNemar's test \citep{mcnemar1947note} run separately for the false-claim and true-claim directions, and a two one-sided test for the \textsc{in-gap} equivalence to zero at margin $|J|<0.1$. We did not run a separate human study. On inspection the \textsc{on-grid} frames show two distinct actions in temporal order, and the strongest models read that order at up to $0.83$ accuracy ($J_0=0.65$).

\subsection{Additional Analyses}

\paragraph{Why Most Models Cave.}
The \textsc{on-grid} deference gap $J$ in the main experiment is small for four of the five order-reading models and large only for InternVL3. Two quantities explain the split. The first is availability, which the no-claim \textsc{on-grid} informedness $J_0$ measures. Five models read the order and four stay at chance (Table~\ref{tab:prior}). The second is a model's prior deference to the user, read from the \textsc{in-gap} accept rate under a strong claim, where no order is visible and the model can only follow the claim. Every model accepts the false claim \textsc{in-gap} at a high rate, from $0.83$ to $0.98$, with one exception: InternVL3 at $0.59$. The \textsc{on-grid} gap tracks this prior: a model whose prior already accepts the claim has no room left for the evidence to move it, so $J$ stays near zero even when $J_0$ is high, while InternVL3, whose prior is far from saturated, turns its availability into a large $J$. Caving is therefore a saturated prior rather than a grounding failure, since Molmo reads the order best of all nine yet is among the most sycophantic. The background-only rate $b$ in Table~\ref{tab:attr} confirms it, at $b\ge0.92$ for the caving readers there and $0.62$ for InternVL3. The fix that helps is a weaker prior rather than a better sampler, and the reversal test reaches the same end by never reading the claim at all.

\begin{table}[t]
\centering\small
\resizebox{\columnwidth}{!}{%
\begin{tabular}{l ccc}
\toprule
Model & \textsc{on-grid} $J_0$ & \textsc{in-gap} $S$ & \textsc{on-grid} $J$ \\
\midrule
LLaVA-OneVision-7B & $0.51$ & $0.95$ & $0.05$ \\
LLaVA-Video-7B & $0.56$ & $0.98$ & $0.05$ \\
InternVL3-8B & $\mathbf{0.52}$ & $\mathbf{0.59}$ & $\mathbf{0.51}$ \\
InternVL3.5-8B & $0.46$ & $0.92$ & $0.12$ \\
Molmo2-8B & $0.65$ & $0.96$ & $0.12$ \\
\midrule
LLaVA-NeXT-Video-7B & $0.00$ & $0.83$ & $0.00$ \\
LLaVA-NeXT-Video-7B-DPO & $0.00$ & $0.87$ & $0.00$ \\
Qwen2.5-VL-7B & $0.04$ & $0.92$ & $0.02$ \\
Qwen3-VL-8B & $0.02$ & $0.97$ & $0.00$ \\
\bottomrule
\end{tabular}%
}
\caption{Availability, prior, and deference results.}
\label{tab:prior}
\end{table}

\paragraph{Single Sampled Event.}
The integer spacing $p_2=p_1+m\Delta$ makes a uniform grid meet both event slots or neither, so the main construction never samples exactly one event. We still probe this excluded case directly. At a fixed budget $k$, we keep one event's frames and replace the other's with background, in an early variant that keeps the earlier event and a late variant that keeps the later one, and read the no-claim order informedness $J_0$ as before. Table~\ref{tab:onesamp} reports all nine models, with each value averaged over $500$ probes and $\bar{J_0}$ averaging the early and late variants.

The four models that cannot read the order stay at $J_0\approx0$ in both single-event variants, so removing one event adds no order cue of its own and these models act as a negative control. Among the five order-reading models, InternVL3 and InternVL3.5 also fall to $J_0\approx0$, since a single event tells them nothing about order. LLaVA-OneVision and Molmo2 instead give an equal and opposite response, positive when the early event is kept and negative when the late event is kept ($+0.50$ and $-0.44$ for LLaVA-OneVision), which shows that they read the order from where the one visible event falls in the sample rather than from a comparison of two events. This position cue carries no order information overall, and once we average over which event is kept, every model returns $\bar{J_0}\le0.05$. A single event thus gives at most a guess from position that averages out over offsets, so the both-events conclusion of the main tables holds. Still, one deployment caveat remains. A model that reads the order from a single event's position can answer confidently against the true order when only the later event is in frame, and the reversal test is built to catch this instability.

\begin{table}[t]
\centering\small
\setlength{\tabcolsep}{4pt}
\resizebox{\columnwidth}{!}{%
\begin{tabular}{l cccc}
\toprule
Model & \textsc{on-grid} $J_0$ & early $J_0$ & late $J_0$ & $\bar{J_0}$ \\
\midrule
LLaVA-OneVision-7B & $0.50$ & $+0.50$ & $-0.44$ & $+0.03$ \\
LLaVA-Video-7B & $0.55$ & $+0.08$ & $-0.06$ & $+0.01$ \\
InternVL3-8B & $0.52$ & $+0.04$ & $0.00$ & $+0.02$ \\
InternVL3.5-8B & $0.47$ & $0.00$ & $+0.04$ & $+0.02$ \\
Molmo2-8B & $0.66$ & $+0.26$ & $-0.17$ & $+0.05$ \\
\midrule
LLaVA-NeXT-Video-7B & $0.00$ & $+0.01$ & $-0.01$ & $0.00$ \\
LLaVA-NeXT-Video-7B-DPO & $0.00$ & $+0.03$ & $-0.03$ & $0.00$ \\
Qwen2.5-VL-7B & $0.04$ & $+0.03$ & $-0.03$ & $0.00$ \\
Qwen3-VL-8B & $0.02$ & $+0.03$ & $-0.03$ & $0.00$ \\
\bottomrule
\end{tabular}%
}
\caption{Single-sampled-event results. $\bar{J_0}$ averages the early and late variants.}
\label{tab:onesamp}
\end{table}

\paragraph{Threshold Sensitivity.}
We choose the threshold $\theta$ of Table~\ref{tab:policy} by a grid search on a separate set of videos, to meet a target coverage. Table~\ref{tab:theta} varies $\theta$ and reports the five order-reading models as one combined group. Raising $\theta$ trades coverage for accuracy monotonically, from $0.90$ coverage at $0.88$ accuracy to $0.44$ coverage at $0.99$ accuracy, so accuracy stays high across the range and $\theta=0.3$ strikes a good balance, at $0.67$ coverage and $0.96$ accuracy. On the four models that cannot read the order, coverage collapses toward zero at every $\theta$, since the test finds no order to report and abstains. The accuracy on the few clips it still answers rests on too small a sample to be meaningful.

\begin{table}[t]
\centering\small
\setlength{\tabcolsep}{4pt}
\begin{tabular}{l ccc c cc}
\toprule
& \multicolumn{3}{c}{can read order} & & \multicolumn{2}{c}{cannot read order} \\
\cmidrule(lr){2-4}\cmidrule(lr){6-7}
$\theta$ & coverage & accuracy & mean $k$ & & coverage & accuracy \\
\midrule
$0.1$ & $0.90$ & $0.88$ & $14.5$ & & $0.43$ & $0.56$ \\
$0.2$ & $0.77$ & $0.94$ & $16.5$ & & $0.23$ & $0.60$ \\
$0.3$ & $0.67$ & $0.96$ & $19.3$ & & $0.09$ & $0.57$ \\
$0.4$ & $0.54$ & $0.98$ & $21.5$ & & $0.05$ & $0.67$ \\
$0.5$ & $0.44$ & $0.99$ & $23.4$ & & $0.03$ & $0.78$ \\
\bottomrule
\end{tabular}
\caption{Reversal test across thresholds $\theta$, combined within each model group.}
\label{tab:theta}
\end{table}

\subsection{Additional Discussion}

\paragraph{Practical Stakes.}
The two directions matter most where mistakes are costly. In incident review, surveillance, and clinical video, a model that accepts a confident wrong claim and a model that rejects a correct human correction are both dangerous, and a fix that only teaches the model to trust the user less trades one error for the other \citep{beigi2025smart,li2025mmsy,rahman2025pendulum}. When the real problem is missing evidence rather than misplaced trust, the right answer is to detect that the sampled frames do not determine the order and either look again or decline to answer, rather than to argue with the user. A test that never reads the claim keeps the valid corrections that one-sided fixes throw away.

\end{document}